\documentclass[11pt]{article}
\usepackage[letterpaper,margin=1in]{geometry}
\usepackage[T1]{fontenc}
\usepackage{lmodern,microtype,amsmath,amssymb,amsthm,mathtools}
\usepackage{booktabs,tabularx,array,xcolor}
\usepackage[colorlinks=true,linkcolor=blue!45!black,citecolor=blue!45!black,urlcolor=blue!45!black]{hyperref}

\usepackage{authblk}
\usepackage{xcolor}
\usepackage{algorithm}
\usepackage{algpseudocode}
\usepackage{float}
\usepackage[colorlinks=true,linkcolor=blue!45!black,citecolor=blue!45!black,urlcolor=blue!45!black]{hyperref}
\usepackage{tikz}
 \usetikzlibrary{arrows.meta} 
 
\definecolor{figInk}{HTML}{222222}
\definecolor{figAttach}{HTML}{7B5EA7}
\definecolor{figBlue}{HTML}{2B6CB0}
\definecolor{figBad}{HTML}{E07A1F}
\definecolor{figGood}{HTML}{16877C}

\newtheorem{theorem}{Theorem}[section]
\newtheorem{lemma}[theorem]{Lemma}

\newtheorem{corollary}[theorem]{Corollary}

\theoremstyle{remark}

\newcommand{\poly}{\operatorname{poly}}

\begin{document}
\title{Polynomial-time additive-error estimation of output probabilities\\ for shallow quantum circuits}
\author[1]{\mbox{Matthew Coudron}}
\author[1, 2]{\mbox{Michael J. Gullans}}
\author[1]{\mbox{Jon Nelson}}
\author[1,3]{\authorcr\mbox{Joel Rajakumar}}
\author[1]{\mbox{Shi Jie Samuel Tan}}

\affil[1]{University of Maryland, College Park}
\affil[2]{QuEra Computing Inc.}
\affil[3]{University of California, Los Angeles}
\date{}
\maketitle
\begingroup
\renewcommand{\thefootnote}{}
\footnotetext{Corresponding authors:
\href{mailto:nelson1@umd.edu}{\texttt{nelson1@umd.edu}},
\href{mailto:jrajakum@ucla.edu}{\texttt{jrajakum@ucla.edu}},
\href{mailto:stan97@umd.edu}{\texttt{stan97@umd.edu}}}
\endgroup

\begin{abstract}
We give a deterministic classical algorithm that estimates $|\langle x|U|0^n\rangle|^2$ to additive error $\varepsilon$ in $\mathrm{poly}(n, 1/\varepsilon)$ time, where $U$ is a constant-depth quantum circuit comprised of gates with bounded fan-in and arbitrary connectivity, and $x$ is an arbitrary $n$-bit output string. This improves over prior state-of-the-art algorithms that takes $n^{O(log(n))}$ time for the same task, $n^{O(log(log(n))}$ when $U$ is geometrically local, and $n^{O(1)}$ for 2D geometrically-local circuits.
\end{abstract}

\section{Introduction}\label{sec:introduction}
Constant-depth quantum circuits provide a basic setting for studying the computational tasks that quantum computers can perform using only a few layers of gates applied in parallel. Such settings are currently experimentally relevant for quantum devices limited by short coherence times, and they are theoretically significant in investigations of separations between quantum and classical computational complexity. In particular, certain families of constant-depth quantum circuits are conjectured to generate distributions that cannot be sampled efficiently by classical computers. ~\cite{GWD,BGK,BV}, and these proposals have been recently implemented on experimental quantum devices~\cite{ringbauer2025verifiable,bluvstein2024logical}. Even without any complexity-theoretic conjectures, they are known to unconditionally outperform classical constant-depth circuits for certain relational and sampling problems~\cite{BGK,watts2019exponential,grier2025quantum,watts2026unconditional}, and these advantages continue to hold up in noisy settings~\cite{bravyi2020quantum,caha20263d}. A basic question is, \\

\textit{What sort of computational advantages over classical computers do constant-depth quantum circuits provide?}\\


Approaching this from the negative angle, improving the classical simulation of constant-depth quantum circuits helps delineate which computational advantages will fundamentally require high circuit depth or adaptive operations, and which can
already arise from a constant number of layers followed by measurement. 
We consider the following computational task, which is weaker than sampling: given the description of an $n$-qubit circuit implementing the unitary $U$, a
string $x\in\{0,1\}^n$, and $\varepsilon>0$, compute $\widehat P(x)$ with
\begin{equation}\label{eq:intro-target}
 |\widehat P(x)-P(x)|\le\varepsilon,
 \qquad P(x)=|\langle x|U|0^n\rangle|^2.
\end{equation}
One can obtain the estimate of $P(x)$ using a quantum computer by simply running the circuit repeatedly and counting the instances of $x$. It would be highly desirable from the perspective of quantum advantage if this task was difficult for classical computers: because it is defined as an observable estimation problem rather than a sampling problem, it more naturally lends itself to error mitigation and verification protocols~\cite{temme2017error,li2017efficient}.

The problem of additive-error output probability estimation for constant-depth circuits has a rich history starting with Bravyi, Gosset, and Movassagh in 2019~\cite{BGM} who gave a polynomial-time classical algorithm for circuits with nearest-neighbor gates in 2-dimensions. Successive works extended the geometric setting and eventually removed the geometric restriction, at the cost of quasipolynomial
time~\cite{BGM,CC,Dontha,BGL}. We obtain deterministic polynomial time
for every fixed depth and gate arity, without a restriction on connectivity.


\begin{theorem}[Main result]\label{thm:main}
Given a bitstring $x \in \{0,1\}^n$, an error tolerance $\varepsilon$, and the description of a constant-depth quantum circuit $U$ on $n$ qubits comprised of bounded fan-in gates with arbitrary connectivity,  a deterministic classical algorithm computes an estimate of $|\langle x|U|0^n\rangle|^2$ to additive error $\varepsilon$ in $\poly(n, 1/\varepsilon)$ time.
\end{theorem}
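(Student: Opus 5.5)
We follow the strategy of Bravyi, Gosset and Movassagh~\cite{BGM} and its quasipolynomial extensions~\cite{CC,Dontha,BGL}, but replace the recursive divide-and-conquer step, which causes the $n^{O(\log n)}$ blow-up, with a single-level decomposition.

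\textbf{Reduction to an observable.} Write $P(x)=\langle 0^n|U^\dagger \Pi_x U|0^n\rangle$ with $\Pi_x=\bigotimes_{i=1}^n |x_i\rangle\langle x_i|$. For each $i$, the Heisenberg-evolved projector $O_i = U^\dagger |x_i\rangle\langle x_i|_i U$ is supported on its backward lightcone $L_i$. Since the depth $d$ and the fan-in $k$ are constant, $|L_i|\le k^d=O(1)$. Each qubit also lies in at most $O(1)$ lightcones, because forward lightcones are likewise bounded. So $P(x)=\langle 0^n|\prod_i O_i|0^n\rangle$, where the $O_i$ commute and are projectors. Equivalently, $P(x)=\prod_{j=1}^n p_j$ with $p_j=\Pr[x_j\mid x_1\dots x_{j-1}]$, and we estimate the probability through this chain.

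\textbf{Estimating the conditionals.} We only need an additive-error estimate of $P$. If $P(x)\le \varepsilon$, outputting a small value suffices. Otherwise we follow BGM: find an ordering in which each conditional $p_j$ depends only on a bounded neighbourhood, up to small error. The key claim is this. Condition the state $U|0^n\rangle$ on the outcomes of a set $S$ of already-fixed qubits. Then the marginal on a new qubit $j$ is approximated to error $\delta$ by conditioning only on qubits within ``interaction distance'' $r=O(\log(1/\delta))$ of $j$. The interaction graph $G$ has qubits as vertices, with $i\sim i'$ when $L_i\cap L_{i'}\ne\emptyset$, and it has constant degree. We aim to prove this decay of conditional correlations via an exact identity: the commuting projectors $O_i$ define a classical-like Markov structure, and the collision/overlap argument of BGM shows that the influence of distant fixed outcomes decays geometrically whenever the relevant marginal probabilities are bounded away from $0$. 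We then compute each $p_j$ by brute-force contraction over a ball of radius $r$. The error is then multiplicative, $\prod_j(1\pm\delta/p_j)$, so we need $\delta\approx \varepsilon/n$ times a lower bound on the $p_j$.

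\textbf{Main obstacle.} The hard step is that a ball of radius $r=\Theta(\log(n/\varepsilon))$ in a constant-degree expander-like graph $G$ has $\mathrm{poly}(n/\varepsilon)^{\log \Delta}$ vertices. We cannot contract over its Hilbert space, which has size $2^{|\text{ball}|}$. Geometric locality controlled this in 2D, and recursion gave quasipolynomial time. Our plan is to avoid exact contraction of the ball and instead estimate $p_j$ as a ratio of two quantities, each a sum over ``cluster expansions'' (connected subgraphs of $G$ containing $j$), truncated at cluster size $m=O(\log(n/\varepsilon))$. A constant-degree graph has at most $(e\Delta)^m=\mathrm{poly}(n/\varepsilon)$ connected subgraphs of size $m$ containing a given vertex, and each cluster term involves only $O(m)$ lightcones. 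Each such term is a tensor contraction of a depth-$d$ circuit restricted to $O(\log(n/\varepsilon))$ qubits, hence computable in $\mathrm{poly}(n/\varepsilon)$ time. The crux is proving absolute convergence of this expansion, i.e.\ that cluster weights decay exponentially in size. We would try first to establish this via a Kotecký--Preiss-type criterion. For that, we write each $O_i=\Pi^{(0)}_i+(O_i-\Pi^{(0)}_i)$ around a product reference, chosen after discarding qubits with nearly deterministic outcomes. This discarding step is where the condition $P(x)>\varepsilon$ must be used to bound the number of ``non-trivial'' factors by $O(\log(1/\varepsilon))$.

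Finally, once convergence holds, we combine the estimates: the $n$ conditional estimates, each with relative error $\varepsilon/(2n)$, give the additive $\varepsilon$ guarantee. All steps are deterministic enumerations, so the total runtime is $\mathrm{poly}(n,1/\varepsilon)$.
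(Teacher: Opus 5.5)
\textbf{There is a genuine gap: the non-peaked bits are never handled.} The step you call the crux, absolute convergence of a truncated cluster expansion, is left open, and the difficulty of the problem sits exactly there. A Barvinok/Koteck\'y--Preiss-type expansion around the target string converges only when every bit it involves is nearly deterministic. This is the Mann--Waite hypothesis $P(1_v)\le\tau=(e^2(2K+1))^{-(K+1)}$ on a strong dependency graph of degree $K$. The $O(\log(1/\varepsilon))$ non-peaked bits that survive the condition $P(x)>\varepsilon$ violate it. Your proposal counts them but never says how their contribution is computed.

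Neither device you propose fixes this. The perturbation $O_i-\Pi_i^{(0)}$ is generally not small in operator norm; the smallness is only in expectation on $|0^n\rangle$. The chain rule merely rewrites each $p_j$ as a ratio of prefix marginals that contain the same bad bits, and now requires relative-error control of them.

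Your intermediate claim on decay of conditional influence is also unsupported, and false under the hypothesis you state. Take the depth-$3$ circuit producing uniform bits $r_0,\dots,r_n$ and parities $y_i=r_{i-1}\oplus r_i$. All its marginals equal $1/2$, and for a consistent target all chain-rule conditionals along $r_0,y_1,\dots,y_n,r_n$ are $1/2$ or $1$. Yet $r_n$ is determined by $r_0$ and all the $y_i$, at graph distance $\Theta(n)$. Any decay in the relevant regime has to come from near-determinism of the intermediate good bits, which leads back to the same good/bad problem.

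\textbf{What the paper does instead.} Your truncated cluster expansion is the right tool for the good bits alone; this is what Mann--Waite supplies. The missing idea is to isolate the bad bits rather than expand around them. The paper pins $\mathcal B=\{i:P(1_i)>\tau\}$ to $0$. Then $|\mathcal B|=O(\log(1/\varepsilon))$ whenever the checkable test $\sum_iP(1_i)<(K+1)\log(1/\varepsilon)$ passes; otherwise the algorithm outputs $0$. It then applies inclusion--exclusion only over good bits. Each selected $S\subseteq\mathcal G$ splits into the union $T$ of components of $G[S]$ adjacent to $\mathcal B$, and a remainder outside $N_G[\mathcal B\cup T]$. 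The remainder factorizes by light-cone independence and resums, giving
\[
P(0^n)=\sum_{T}(-1)^{|T|}P(0_{\mathcal B},1_T)\,P(0_{T'}),\qquad T'=[n]\setminus N_G[\mathcal B\cup T]\subseteq\mathcal G .
\]
Since $P(0_{\mathcal B},1_T)\le\tau^{|T|/(K+1)}$ and there are at most $2^{|\mathcal B|+s}(eK)^s$ admissible $T$ of size $s$, the terms with $|T|>|\mathcal B|+\lceil\log_2(2/\varepsilon)\rceil$ contribute at most $\varepsilon/2$. Each retained term needs two quantities:
\begin{itemize}
\item $P(0_{\mathcal B},1_T)$, computed exactly by brute force on $O(\log(1/\varepsilon))$ bits;
\item $P(0_{T'})$, a marginal on good bits only, to which Mann--Waite applies as a black box.
\end{itemize}
Because $P(0_{\mathcal B},1_T)\le 1$, the additive errors of the second estimates simply add. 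The bad bits never enter any expansion, and nothing involving them needs relative-error control.
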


More generally, our classical algorithm can estimate probabilities for distributions over binary random variables defined on a bounded-degree strong dependency graph, given a
subroutine for evaluating probabilistic outcomes on small sets $S$ with runtime exponential in $|S|$.

At a high level, a common starting point for the latter three works that improve on the protocol by Bravyi, Gosset, and Movassagh is that estimating an $\varepsilon$-additive error output probability requires a nonzero estimate only when that probability exceeds $\varepsilon$. Therefore, in shallow quantum circuits, independence between regions of the circuit with disjoint light cones forces many regions to be strongly concentrated on the corresponding target bitstring. \cite{CC} and \cite{Dontha} exploit slices of the circuit with high probability weight to recursively decompose the circuit until they are small enough to approximate, while \cite{BGL} develops both a Hamming-ball approximation and a geometric dissection approach. These methods turn the combination of peakedness and restricted light-cone overlap into tractable classical approximations. 

The main improvement we make has to do with how the classical algorithm deals with the small set of regions that are not peaked. At a high-level, we introduce a cluster expansion technique to handle the regions of the output distribution that are peaked \cite{mann2025approximate,Barvinok}, and use an inclusion-exclusion argument to bound the approximation error that correlations with the non-peaked regions could introduce. The cluster expansion framework has recently gained tremendous interest in the setting of quantum information ~\cite{
harrow_classical_algorithms_correlation,
mann_efficient_algorithms_approximating,
mann_minko_2024,
chen_rouze_chen_fermions_2025,bakshi_high-temperature,yin_polynomial-time_classical_sampling,zlokapa_quasipolynomial_syk_2026,zlokapa_syk,midha2026beyond,midha2026algorithmic,CE_shorttimeLRB23,CE_metastab25,CE_ldpcgap25}, as it provides a way to study systems that are perturbed from exactly solvable points \cite{CE_Mayer1937,kotecky_preiss_1986,
netocny_redig_2004,friedli_velenik}. In the setting of output probability estimation for quantum circuits, cluster expansion (or similar techniques) have been applied before \cite{BGM,harrow_classical_algorithms_correlation,mann_minko_2024}, but most prior work uses the identity or maximally mixed state as the exactly solvable point. In our work, we instead choose the peaked computational basis state as our exactly solvable point. In this formulation, we are able to directly apply a classical algorithm of \cite{mann2025approximate} as a black box that computes sufficiently peaked marginals of the output distribution. Our main technical contribution is a reduction to this subroutine via inclusion-exclusion techniques.


\begin{table}[t]
\centering
\small
\renewcommand{\arraystretch}{1.12}
\begin{tabularx}{\textwidth}{@{}l X l@{}}
\toprule
Work & Connectivity & Running time \\
\midrule
Bravyi--Gosset--Movassagh~\cite{BGM}
 & Arbitrary & $2^{\widetilde O(\sqrt n)}$ \\
Bravyi--Gosset--Movassagh~\cite{BGM}
 & Two-dimensional grid & $\poly(n)$ \\
Coble--Coudron~\cite{CC}
 & Three-dimensional grid & $n^{(\log n)^{O(1)}}$ \\
Dontha et al.~\cite{Dontha}
 & Grid in fixed dimension & $n^{(\log n)^{O(1)}}$ \\
Bravyi--Gosset--Liu~\cite{BGL}
 & Arbitrary & $n^{O(\log n)}$ \\
Bravyi--Gosset--Liu~\cite{BGL}
 & Grid in fixed dimension $D>2$ & $n^{O(\log\log n)}$ \\
This work & Arbitrary & $\poly(n)$ \\
\bottomrule
\end{tabularx}
\caption{Comparison of running times for output probability estimation for constant-depth circuits with gates of bounded fan-in up to
additive error that is inverse polynomial in $n$.}
\label{tab:intro-comparison}
\end{table}

The comparison in Table~\ref{tab:intro-comparison} concerns estimation of a specified output probability. \cite{BGL} additionally give a quasipolynomial-time approximate-sampling algorithm and, in their general construction, an approximation to the output state, under the peakedness assumption $\max_x P(x)\ge n^{-a}$ for a fixed constant $a$. We emphasize that
Theorem~\ref{thm:main} gives polynomial-time probability
estimation without any peakedness assumption, but does not
provide polynomial-time algorithms for those stronger tasks.
In particular, separate additive estimates of individual probabilities need not form a normalized distribution or guarantee bounded total variation error.
Extending our work to polynomial-time sampling
of arbitrary-connectivity peaked shallow circuits is a natural open question.


\subsection{Proof Strategy}

Suppose the target output bitstring is $x \in \{0,1\}^n$. Since this is a shallow circuit, we can brute-force compute the probability that any single output bit agrees with the corresponding bit in $x$. Consider the set of \textit{bad} output bits $\mathcal B$ whose probability of agreeing with their corresponding bit in $x$ is less than some threshold $1-\tau$. The starting point of \cite{BGL} (and prior work on this matter) is to observe that in shallow quantum circuits, the probability of measuring the outcome $x$ is bounded by $(1-\tau)^{\Theta(|\mathcal B|)}$, and so $|\mathcal B|$ must be small if $x$ is to be measured with non-negligible probability. On the remaining set of \textit{good} output bits, the output distribution is heavily concentrated around bitstrings that mostly agree with $x$, with deviations bounded by $\tau$. It is known how to estimate marginals of such distributions to relative-error, via a cluster expansion that converges when $\tau$ is a sufficiently low constant~\cite{mann2025approximate,Barvinok}. Our main contribution is to show that the output probability of $x$ can be expressed as a sum of marginals on the \textit{good} bits, weighted by marginals on regions that connect the \textit{good} and \textit{bad} bits. In particular, we show that enumerating and evaluating this expression using only small connecting regions produces the desired additive error estimate.

\begin{figure}[h]
\centering
\def\lightcone#1#2{%
  \pgfmathtruncatemacro{\wparity}{mod(#1,2)}%
  \ifnum\wparity=1
    \pgfmathtruncatemacro{\La}{#1}\pgfmathtruncatemacro{\Ra}{#1+1}%
  \else
    \pgfmathtruncatemacro{\La}{#1-1}\pgfmathtruncatemacro{\Ra}{#1}%
  \fi
  \pgfmathtruncatemacro{\Lb}{\La-1}\pgfmathtruncatemacro{\Rb}{\Ra+1}%
  \pgfmathtruncatemacro{\Lc}{\Lb-1}\pgfmathtruncatemacro{\Rc}{\Rb+1}%
  \path[#2]
    (#1-.25,4.05) -- (#1-.25,3.23) -- (\La-.25,2.97) -- (\La-.25,2.23)
    -- (\Lb-.25,1.97) -- (\Lb-.25,1.23) -- (\Lc-.25,.97) -- (\Lc-.25,.18)
    -- (\Rc+.25,.18) -- (\Rc+.25,.97) -- (\Rb+.25,1.23) -- (\Rb+.25,1.97)
    -- (\Ra+.25,2.23) -- (\Ra+.25,2.97) -- (#1+.25,3.23) -- (#1+.25,4.05) -- cycle;
}
\resizebox{\linewidth}{!}{%
\begin{tikzpicture}[x=.30cm,y=.52cm,font=\small,text=figInk,>=Latex,
  cone/.style={fill=figAttach,fill opacity=.10,draw=figAttach,draw opacity=.65,line width=.45pt},
  selcone/.style={fill=figBlue,fill opacity=.23,draw=figBlue,draw opacity=.65,line width=.9pt},
  gate/.style={rounded corners=.7pt,fill=white,fill opacity=.87,draw=figInk!55,line width=.45pt},
  bad/.style={rectangle,fill=figBad,minimum size=5pt,inner sep=0pt}]
\node[anchor=west,font=\bfseries] at (-1,5.9) {};
\node[font=\footnotesize] at (11,5.3) {Isolated \textit{bad} output bits};
\node[font=\footnotesize] at (38,5.3) {A cluster of \textit{bad} output bits};
\draw[->,figInk,line width=.8pt] (-.2,.1)--(-.2,4.35);
\node[rotate=90,font=\footnotesize] at (-1.6,2.15) {Circuit Depth};

\foreach \w in {35,36,38,42} {\lightcone{\w}{cone}}
\foreach \w in {4,18,34,40}  {\lightcone{\w}{selcone}}

\foreach \w in {1,...,48} {
  \draw[figInk!30,line width=.35pt] (\w,.2)--(\w,4.05);
  \node[font=\fontsize{4}{5},anchor=north] at (\w,.08) {$|0\rangle$};
}
\foreach \l in {1,3} {
  \foreach \w in {1,3,...,47} {
    \draw[gate] (\w-.18,\l-.01) rectangle (\w+1.18,\l+.21);
  }
}
\foreach \w in {2,4,...,46} {
  \draw[gate] (\w-.18,1.99) rectangle (\w+1.18,2.21);
}
\foreach \w in {1,...,48} {\fill[figInk!35] (\w,4.08) circle (.8pt);}
\foreach \w in {4,18,34,35,36,38,40,42} {\node[bad] at (\w,4.08) {};}
\foreach \w/\k in {4/1,18/2,34/3,40/4} {
  \draw[figBlue,line width=1pt] (\w,4.08) circle (5pt);
  \node[text=figBlue,font=\scriptsize,above] at (\w,4.4) {$a_\k$};
}

\node[anchor=west,font=\footnotesize] at (-1,-1.1)
 {\textcolor{figBad}{$\blacksquare$} \textit{bad} output bits
 \qquad \textcolor{figBlue}{$\bigcirc$} \textit{bad} output bits that are uncorrelated with eachother};
\end{tikzpicture}%
}

\caption{Depiction of `lightcones' in a 1D quantum circuit. The \textit{bad} bits of the output distribution agree with $x$ with probability $\leq 1-\tau$. The separation of the blue lightcones illustrates how the probability of $a_1,a_2,a_3,a_4$ all agreeing with $x$ is $\leq (1-\tau)^4$.}
\label{fig:lightcones-bad-bits}
\end{figure}

To provide intuition for this decomposition, we first construct a connectivity graph $G$ by drawing edges between bits that could be correlated with each other.  In shallow circuits, this graph has bounded degree, because two output bits need to share a ``light cone'' in the circuit in order to be correlated. We will apply an inclusion-exclusion argument and use these lightcone properties to write $|\langle x|U|0^n\rangle|^2$ in the following form:
\begin{align}
    &\Pr[\text{all bits agree with $x$}] \notag \\
    &=\notag
\sum_{\substack{T \subseteq [n] \backslash \mathcal B\text{: all connected components of $G[T]$ touch $\mathcal B$}}}
(-1)^{|T|} \\
&\times \Pr[\text{all bits in $\mathcal{B}$ agree with $x$, \textcolor{figAttach}{all bits in $T$ disagree with $x$}}] \label{eq:bad}\\
&\times \Pr[\text{all bits outside the neighborhood of $T \cup \mathcal B$ agree with $x$}] \label{eq:good}
\end{align}

 Line (\ref{eq:bad}) isolates how the positive influence of the \textit{good} bits extends into the \textit{bad} bits: because \textit{good} bits should be unlikely to disagree with $x$ (as we have highlighted in purple), this term decays roughly exponentially with the number of such good bits, i.e. $|T|$. For sufficiently small $\tau$, this decay dominates the number of possible choices of $T $. Thus, we can enumerate only the terms corresponding to small-sized $T $, truncate the remaining terms, and obtain the desired additive error. In the resulting approximation, the term in line (\ref{eq:bad}) can be computed by brute force because the aforementioned argument of Bravyi-Gosset-Liu \cite{BGL} guarantees that $\mathcal B$ is also small (so the full set $T \cup \mathcal B$ is small). The term in line (\ref{eq:good}) can be computed using the aforementioned cluster expansion \cite{mann2025approximate,Barvinok} because it is a marginal on only \textit{good} bits.

 \input{good-bad-decomposition}

\subsection{Implications for expectation values}
\label{sec:intro-expectations}
Theorem~\ref{thm:main} also gives polynomial-time additive
estimation of the magnitude of product-operator expectation
values. For single-qubit operators $A_i$ with
$\|A_i\|\le1$, let
\[
 m=\langle0^n|U^\dagger
       \left(\bigotimes_{i=1}^n A_i\right)
       U|0^n\rangle.
\]
The construction used in~\cite{BGM,BGL} expresses $|m|^2$ as an output
probability of a circuit on at most $2n$ qubits with depth
at most $2d+1$, where $d$ is the depth of $U$.
Estimating this probability to additive error
$\varepsilon^2$, projecting the estimate onto $[0,1]$,
and taking its square root therefore estimates $|m|$
to additive error $\varepsilon$ in
$\poly(n,1/\varepsilon)$ time for fixed depth and gate arity.

When $m$ is known to be nonnegative, this procedure also
estimates $m$ itself. For a general real expectation value,
however, the squared magnitude does not determine its sign. Similarly,
for a complex expectation value, it does not determine its
phase. Thus the task of polynomial-time estimation of general signed or complex expectation values remains an open problem.

\section{Setup and preliminaries}\label{sec:model}
Throughout, $\log$ denotes the natural logarithm and $\log_2$ the base-two logarithm. Denote the set $\{1, \ldots, n\}$ for $n \in \mathbb{N}$ as $[n]$. Our computational model includes all $n$-qubit, depth-$d$, and $k$-local quantum circuits where the gates at each of the $d$ layers of the circuit act on disjoint sets of at most $k$ qubits. Following the brute-force-style calculations in~\cite{BGM}, we state running times for computing output probabilities exactly. Thus, the approximation error comes from truncating inclusion-exclusion and the subroutine $\mathsf{Good}$ defined in Section~\ref{sec:algorithm}.

Following the notation in $\cite{BGL}$, we are interested in estimating the output probability
\[P(x) = |\langle x | U |0^n \rangle|^2,\]
where $U$ is an $n$-qubit, depth-$d$, and $k$-local quantum circuit and $x$ is a length-$n$ variable bitstring.
For the purpose of our work, it suffices to construct a classical algorithm to estimate $P(0^n) = |\langle 0^n | U | 0^n \rangle|^2$.
For any computational basis input or output strings $x, y$,  we can simply conjugate $U$ with the appropriate bit-flip operators to give us $|\langle y | U | x \rangle|^2 = |\langle 0^n | X^y U X^x | 0^n \rangle|^2$. For $S \subseteq [n]$, define $x_S$ to be the restriction of $x$ to the indices in $S$. Further define the marginal $P_S(x_S)$ to be the probability of measuring $x_S$ on all qubits in $S$, and abbreviate to $P(x_S)$ when it is clear which qubits are involved.
Throughout, we will use the notation $0_S$ and $1_S$ to denote the measurement outcome bitstring of all $0$'s or all $1$'s, respectively, on the qubits in $S$.
When $S$ is the singleton set $\{i\}$, we abbreviate the measurement outcome as $0_i$ and $1_i$.
Finally, let $\widehat{P}$ denote the probability estimator and use the same notation convention for the marginal probabilities.
For a specified outcome $x_S$, let $\widehat P(x_S)$ denote an estimate of $P(x_S)$. Note that estimates for different outcomes or subsets need not be marginals of a common probability distribution.


Define the backward light cone of an output qubit $i$ as the set of input qubits with a directed path through the gates in the circuit to output qubit $i$. Let $\mathcal{L}(i)$ be this set. Similarly, we abuse the notation slightly and define
\[\mathcal{L}(A) \equiv \bigcup_{j \in A} \mathcal{L}(j),\qquad A \subseteq [n].\]
If two subsets of qubits $A, B \subseteq [n]$ satisfy the property that $\mathcal{L}(A) \cap \mathcal{L}(B) = \emptyset$, then we say that $A$ and $B$ are \emph{lightcone-separated}.
We construct a \emph{light-cone overlap graph} $G = (V, E)$ where $|V| = n$. We assign one vertex per output qubit and introduce an edge $(i, j) \in E$ if and only if $\mathcal{L}(i) \cap \mathcal{L}_(j) \neq \emptyset$ for $i \neq j$. Denote the maximum degree of $G$ as $K$. A simple light-cone bound gives us $K \leq k^{2d}$. We denote closed neighborhoods of a subset of vertices $S \subseteq V$ as $N_G[S]$ where $S \subseteq N_G[S]$.

We present a set of well-known light-cone properties in the following lemma.

\begin{lemma}[Basic light-cone facts]\label{lemma:light-cone-facts}
    The graph $G$ can be constructed in polynomial time for fixed constants $d$ and $k$. The marginal has the following properties:
    \begin{enumerate}
        \item $0\le P(x_S)\le P(x_T)\le1$ whenever $T\subseteq S$.
        \item If no edge joins disjoint sets $S,T$, their full measurement output vectors are independent. In particular,
        \[
         P(x_{S\cup T})=P(x_S)P(x_T).
        \]
    \end{enumerate}
\end{lemma}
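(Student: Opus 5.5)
We prove the three claims in turn: the construction of $G$, monotonicity of marginals, and independence across lightcone-separated sets. The only genuinely quantum step is the last one.

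\textbf{Constructing $G$.} For each output qubit $i$ we compute $\mathcal{L}(i)$ by a backward traversal of the circuit. Start from $\{i\}$ at the output layer. At each of the $d$ layers, going backwards, replace the current set by the union of the supports of the gates touching it. A qubit untouched by any gate in that layer stays in the set as itself. Each step multiplies the size by at most $k$, so $|\mathcal{L}(i)|\le k^d$, and the traversal costs $O(\poly(n))$ time. We then test $\mathcal{L}(i)\cap\mathcal{L}(j)\neq\emptyset$ for all pairs $i\ne j$, which takes $O(n^2k^d)$ time. The degree bound $K\le k^{2d}$ is a by-product. If $j$ is adjacent to $i$, then $j$ lies in the forward light cone of some qubit of $\mathcal{L}(i)$, and forward light cones also have size at most $k^d$.

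\textbf{Item 1.} This follows directly from the definition of a marginal. Write
\[
P(x_T)=\sum_{y:\,y_T=x_T}|\langle y|U|0^n\rangle|^2 .
\]
The event $\{y_S=x_S\}$ is contained in $\{y_T=x_T\}$ whenever $T\subseteq S$. Summing nonnegative terms over a subset therefore gives the chain of inequalities, and the upper bound $1$ comes from normalization.

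\textbf{Item 2.} Let $S,T$ be disjoint with no edge between them, so $\mathcal{L}(S)\cap\mathcal{L}(T)=\emptyset$. Write $P(x_{S\cup T})=\langle 0^n|U^\dagger(\Pi_S\otimes\Pi_T)U|0^n\rangle$, where $\Pi_S=|x_S\rangle\langle x_S|$ and $\Pi_T=|x_T\rangle\langle x_T|$. Conjugating a single gate layer by layer, $U^\dagger\Pi_SU$ is supported on $\mathcal{L}(S)$, since each gate either commutes past the operator or enlarges its support exactly along the backward light cone. Likewise $U^\dagger\Pi_TU$ is supported on $\mathcal{L}(T)$. Because $\Pi_S$ and $\Pi_T$ commute and act on disjoint qubits,
\[
U^\dagger(\Pi_S\Pi_T)U=(U^\dagger\Pi_SU)(U^\dagger\Pi_TU).
\]
The two factors act on the disjoint registers $\mathcal{L}(S)$ and $\mathcal{L}(T)$, and the input $|0^n\rangle$ is a product state. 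Hence the expectation factorizes into $P(x_S)P(x_T)$. This holds for every choice of $x_S,x_T$, so the measurement outcomes on $S$ and on $T$ are independent random vectors.

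The only step requiring care is the support claim for $U^\dagger\Pi_SU$. We would prove it by induction on depth, peeling off the last layer: gates disjoint from the current support cancel with their adjoints, and the remaining gates enlarge the support precisely to the definition of $\mathcal{L}$.
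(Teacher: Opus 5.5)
Your proof is correct and follows the paper's argument. Bounded backward and forward light cones give $|\mathcal{L}(i)|\le k^d$ and $K\le k^{2d}$, monotonicity is event containment, and independence comes from back-propagating the projectors onto the disjoint input registers of the product state $|0^n\rangle$. The differences are only cosmetic: you build $G$ with $O(n^2k^d)$ pairwise intersection tests rather than the paper's $O_{d,k}(nk^{2d})$ neighbor enumeration, and you spell out the layer-by-layer induction for the support of $U^\dagger\Pi_S U$, which the paper leaves implicit.
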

\begin{proof}
    Because each layer of the backward light cone increases the number of relevant qubits by at most a factor of $k$, the size of a backward light cone for any output qubit $i$ is given by $|\mathcal{L}(i)| \leq k^d$. Similarly, each of the input qubits in $\mathcal{L}(i)$ can reach a set of at most $k^d$ output qubits using the same argument. Therefore, every output qubit has a backward light cone that has nontrivial overlap with the backward light cone at most $k^{2d}$ other output qubits, giving us $K \leq k^{2d}$ which is a constant. Enumerating these bounded-size cones takes constant time to construct the edges for each of the $n$ vertices so the graph can be constructed in $O_{d, k}(nk^{2d})$ operations with polynomial bit overhread for labels.
    For (1), the monotocity shown in the expression is immediate from the event interpretations of $P(x_S)$ and $P(x_T)$.
    For (2), the absence of an edge joining sets $S$ and $T$ implies that the backpropagated measurement operators on $S$ and $T$ necessarily act on disjoint input qubits. Since the input qubits are assumed to be in a product state, the joint expectation with respect to the measurement operators will factor. This proves the independence of the two entire collections.
\end{proof}

\section{Isolating the bad bits}\label{sec:core}
First, we will choose some threshold $0\leq\tau\leq 1 $ and define $\mathcal{B}$ and $\mathcal{G}$ as the set of bad and good bits respectively,
\[\mathcal{B} = \left\{i\,:\,P(1_i) > \tau \right\},\qquad \mathcal{G} = [n] \setminus \mathcal{B}\]
For reasons that will be apparent in Section \ref{sec:small}, we will fix 

\begin{align}
    \tau =
 \left(\frac{1}{e^{2}(2K+1)}\right)^{K+1}
\end{align}

First, we have the following bound on the size of $\mathcal{B}$

\begin{lemma}\label{lemma:when_to_output_zero}
    If $\sum_{i \in [n]} P(1_{i}) \geq (K+1)\log\frac{1}{\varepsilon}$, then $P(0^n) \leq\varepsilon$. Otherwise,
    \begin{align}
        |\mathcal B| \leq (K+1)\left(\frac{1}{e^{2}(2K+1)}\right)^{-(K+1)} \log\frac{1}{\varepsilon}
    \end{align}
\end{lemma}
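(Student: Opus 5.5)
The plan is to find a large set of qubits that are pairwise lightcone-separated, using the light-cone overlap graph $G$, and then use independence to bound $P(0^n)$. Since $G$ has maximum degree $K$, it can be properly colored with $K+1$ colors; a greedy coloring suffices. Let $C_1,\dots,C_{K+1}$ be the color classes. Each $C_j$ is an independent set in $G$. By Lemma~\ref{lemma:light-cone-facts}(2), applied inductively (adding one vertex at a time, which has no edge to the previously chosen set), the output bits in $C_j$ are mutually independent. Hence
\[
P(0_{C_j}) = \prod_{i\in C_j} P(0_i) = \prod_{i\in C_j}\bigl(1-P(1_i)\bigr) \le \exp\Bigl(-\sum_{i\in C_j} P(1_i)\Bigr),
\]
where the last step uses $1-t\le e^{-t}$.

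For the first claim, suppose $\sum_{i\in[n]} P(1_i)\ge (K+1)\log\frac1\varepsilon$. By pigeonhole, some class $C_j$ has $\sum_{i\in C_j}P(1_i)\ge \log\frac1\varepsilon$. Monotonicity, Lemma~\ref{lemma:light-cone-facts}(1), then gives
\[
P(0^n)\le P(0_{C_j})\le e^{-\log(1/\varepsilon)}=\varepsilon .
\]

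For the second claim, we are in the complementary case, so $\sum_{i} P(1_i) < (K+1)\log\frac1\varepsilon$. Every $i\in\mathcal B$ has $P(1_i)>\tau$, so
\[
\tau|\mathcal B| < \sum_{i\in\mathcal B}P(1_i)\le \sum_i P(1_i) < (K+1)\log\tfrac1\varepsilon .
\]
Dividing by $\tau=\bigl(\frac{1}{e^2(2K+1)}\bigr)^{K+1}$ gives exactly the stated bound.

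None of the steps is hard. The only point needing care is the independence of a whole color class rather than a single pair. This follows by repeatedly applying Lemma~\ref{lemma:light-cone-facts}(2) to $S=\{i_1,\dots,i_{m-1}\}$ and $T=\{i_m\}$, using that no edge joins them.
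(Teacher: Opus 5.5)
Your proof is correct and follows the paper's argument: a proper $(K+1)$-coloring of $G$, pigeonhole onto a color class whose bits are mutually independent, the bound $P(0^n)\le\prod_{i\in C_j}(1-P(1_i))\le e^{-\sum_{i\in C_j}P(1_i)}$, and then $\tau|\mathcal B|\le\sum_i P(1_i)$ in the complementary case. Your inductive use of Lemma~\ref{lemma:light-cone-facts}(2) to get independence of a whole color class spells out a step the paper only asserts. Your strict inequality $\tau|\mathcal B|<\sum_{i\in\mathcal B}P(1_i)$ fails when $\mathcal B=\emptyset$, but replacing it with $\le$ still gives the stated bound.
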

\begin{proof}
Color the vertices of the graph $G$ with $K + 1$ colors. By the pigeonhole principle, we can find at least one color $C$ such that for all output qubits $i \in C$, we have $\sum_{i \in C} P(1_i) \geq  \sum_{i \in [n]} P(1_{i}) / (K + 1)$. The output qubits in $C$ are mutually independent, i.e., they are light-cone separated, and therefore,
    \begin{align*}
        P(0^n) &\leq \prod_{i \in C} (1 - P(1_i)) \leq e^{-\sum_{i \in C}P(1_i)} \leq e^{-\sum_{i \in [n]} P(1_i)/(K+1)}
    \end{align*}
    Thus, when $\sum_{i \in [n]} P(1_{i}) \geq (K+1)\log\frac{1}{\varepsilon}$, we have the preliminary condition. Otherwise, observe that each bit in $\mathcal{B}$ contributes $\tau$ to $\sum_{i \in [n]} P(1_{i})$, i.e. 
    \begin{align}
        |\mathcal{B}|\tau \leq \sum_{i \in [n]} P(1_{i}) \leq (K+1)\log \frac{1}{\varepsilon}
    \end{align}
    which obtains the main bound.
\end{proof}

\subsection{Separating the contributions of good and bad bits}
The core insight of our work is to pull apart the joint probability $P(0^n)$ into marginal probabilities of good bits and marginal probabilities of bad bits with a limited number of attached good bits.

An inclusion-exclusion decomposition expresses $P(0^n)$ in terms of mixed outcomes that are zero on the bad bits and one on selected good bits:
\begin{align}
    P(0^n) =\sum_{S \subseteq [n] \backslash B }(-1)^{|S|}P(0_{\mathcal B},1_S)
\end{align}
Our starting point is to observe that many of the terms in the sum decompose into a product of probabilities due to the shallowness of the circuit. Recall from Section~\ref{sec:model} that we can construct a light-cone overlap graph $G$ that has constant degree by assigning edges between vertices that correspond to bits that are correlated with each other. Now, we will write each $S$ as $S = T \sqcup S \backslash T$ where $T$ contains all the bits of $S$ that form connected components adjacenet to $\mathcal B$ . Further, for any $T$, let $T' =[n]\setminus N_G[\mathcal B\cup T] \subseteq [n]\backslash \mathcal B$ denote all bits that are not adjacent to $T$ or $\mathcal B$.
\begin{align}
    P(0^n) &=    \sum_{S \subseteq [n] \backslash \mathcal B }(-1)^{|S|}P(0_{\mathcal B},1_T) P(1_{S\setminus T})\\
    &=    \sum_{T \text{ consists of conn. comp. touching } \mathcal B}(-1)^{|T|}P(0_{\mathcal B},1_T)\left[\sum_{S \subseteq T'}(-1)^{|S|}P(1_S)\right]\\
    &=    \sum_{T \text{ consists of conn. comp. touching } \mathcal B}(-1)^{|T|}P(0_{\mathcal B},1_T)\left[P(0_{T'}) \right]\\
\end{align}

However, this is still not a tractable decomposition to evaluate since $T$ can be very large. We next prove that we can safely truncate the sum to only include small sets $T$.

\subsection{Bounding the bad bits with few good bits connected to them}
Notice that by definition, good bits have a small marginal probability of disagreeing with target string $0^n$. In fact, the joint probability decays exponentially as the number of good bits involved increases. This means that the terms associated with sets $T$ that have many good bits have very small contribution, and can be safely truncated as we show next.

\begin{lemma}\label{lem:truncating_error}
    \begin{equation}
        |P(0^n)-\widehat{P}(0^n)| \leq \varepsilon ,
    \end{equation}
    where
    \begin{equation}
        \widehat{P}(0^n) =\sum_{\substack{T \text{ path-connected to } \mathcal B\\|T| \leq \ell}}(-1)^{|T|}P(0_{\mathcal B},1_T)\left[P(0_{T'}) \right]\\ \label{eq:good-bad-decomp}
    \end{equation}
    and
    \begin{equation}
        \ell = |\mathcal B|  + \left\lceil\log_2 \frac{1}{\varepsilon}\right\rceil
    \end{equation}
\end{lemma}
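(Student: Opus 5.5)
The plan is to show that every discarded term is tiny and that there are few enough of them. By the exact decomposition derived just above, $P(0^n)-\widehat P(0^n)$ equals the sum of the terms with $|T|>\ell$, where $T$ ranges over good-bit sets all of whose components in $G[T]$ touch $\mathcal B$. Since $0\le P(0_{T'})\le 1$ and, by Lemma~\ref{lemma:light-cone-facts}(1), $P(0_{\mathcal B},1_T)\le P(1_T)$, it suffices to prove
\[
\sum_{m>\ell}\;\sum_{\substack{T\text{ admissible}\\ |T|=m}} P(1_T)\le\varepsilon .
\]

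\textbf{Step 1 (decay of each term).} Fix an admissible $T$ with $|T|=m$. The graph $G[T]$ has maximum degree at most $K$, so it is properly $(K+1)$-colorable. Some color class $C\subseteq T$ therefore has $|C|\ge m/(K+1)$, and $C$ is independent in $G$. By Lemma~\ref{lemma:light-cone-facts}(2), applied inductively, the bits in $C$ are mutually independent. Each of them is good, so $P(1_i)\le\tau$. Combining these with monotonicity,
\[
P(1_T)\le P(1_C)=\prod_{i\in C}P(1_i)\le\tau^{m/(K+1)}=\bigl(e^{2}(2K+1)\bigr)^{-m}.
\]
This is exactly why $\tau$ was fixed as it was.

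\textbf{Step 2 (counting admissible $T$).} Every component of $G[T]$ touches $\mathcal B$, so $T\cup\mathcal B$ induces a subgraph in which every component contains a vertex of $\mathcal B$. Hence $T\cup\mathcal B$ admits a spanning forest whose roots are exactly the $b=|\mathcal B|$ bad vertices, with $m$ non-root vertices. I would bound the number of such forests by the standard tree-counting argument used for connected-set enumeration in bounded-degree graphs. Encode each forest by a depth-first exploration from the roots: at each step choose one of at most $K$ neighbors, or backtrack. This gives at most $2^{m+b}(eK)^m$ forests, and hence at most that many admissible sets $T$ of size $m$. Getting these constants right, and in particular getting only a factor $2^{b}$ rather than something like $(eK)^{b}$ depending on $|\mathcal B|$, is the step I expect to be the main obstacle. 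If the exploration encoding loses too much, I would instead attach to each $b\in\mathcal B$ a connected set rooted at $b$, count compositions of $m$ into $b$ nonnegative parts (at most $\binom{m+b}{b}\le 2^{m+b}$ of them), and use the bound $(eK)^{m_b}$ on rooted connected sets of size $m_b+1$.

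\textbf{Step 3 (summing).} Combining Steps 1 and 2, the discarded mass is at most
\[
\sum_{m>\ell}2^{b}\Bigl(\frac{2eK}{e^{2}(2K+1)}\Bigr)^{m}\le 2^{b}\sum_{m\ge \ell+1}e^{-m}=2^{b}\,\frac{e^{-(\ell+1)}}{1-e^{-1}}\le 2^{b}\,2^{-\ell}.
\]
The last inequality holds because $e^{-(\ell+1)}/(1-e^{-1})\le 2^{-\ell}$ for all $\ell\ge0$. Finally, $\ell=b+\lceil\log_2(1/\varepsilon)\rceil$ gives $2^{b}2^{-\ell}\le\varepsilon$, which proves the lemma.
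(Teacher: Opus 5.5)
Your proposal is correct and follows the paper's proof. It makes the same reduction to $\sum_{|T|>\ell}P(0_{\mathcal B},1_T)$ and gets the same $\tau^{|T|/(K+1)}$ decay from a lightcone-separated subset of $T$. Your fallback count in Step~2 (compositions of $m$ over the bad vertices, times $(eK)^{m_b}$ rooted connected sets) is exactly the paper's bound $2^{|\mathcal B|+s}(eK)^s$. The only other difference is cosmetic: you sum the tail with $e^{-m}$ where the paper uses $2^{-s}$.

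Your primary DFS encoding also works if you count only the positions of the $m$ forward moves in the length-$(2m+b)$ exploration word. This gives $\binom{2m+b}{m}K^m\le 2^{b}(4K)^m\le 2^{m+b}(eK)^m$, so no $b$-dependent factor beyond $2^b$ arises.
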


\begin{proof}
    Our goal is to bound, 
    \begin{align}
        |P(0^n)-\widehat{P}(0^n)| &= \left|
\sum_{\substack{T \text{ consists of conn. comp. touching } \mathcal B\\ |T| > \ell}}
(-1)^{|T|}
P(0_{\mathcal B},1_T)
\left[P(0_{T'})\right]
\right| \\
& \leq \sum_{\substack{T \text{ consists of conn. comp. touching } \mathcal B\\ |T| > \ell}}
P(0_{\mathcal B},1_T)
    \end{align}
    Consider any $T$ of size $|T| = s$. Note that by choosing $s/(K+1)$ qubits in $T$ with non-overlapping lightcones, we have
    \begin{align}
        P(0_{\mathcal B},1_T) \leq \tau^{s/(K+1)}
    \end{align}
    Note the exponential decay in $s$. To finish the bound, we must bound the number of valid choices of $T$ of size $s$. Fix an ordering $b_1,\ldots,b_{|\mathcal B|}$ of the vertices in $\mathcal B$. For each valid $T$, assign each connected component of $G[T]$ to the first vertex in this ordering adjacent to that component. Such a vertex exists because $T$ is path-connected to $\mathcal B$.

Let $s_i$ be the total number of vertices in the components assigned to $b_i$. Then $s_1+\cdots+s_{|\mathcal B|}=s$, with each $s_i\geq 0$. For $|\mathcal B|\geq 1$, there are
$\binom{s+|\mathcal B|-1}{|\mathcal B|-1}\leq 2^{|\mathcal B|+s}$
possible choices of these sizes. The union of $b_i$ and all components assigned to it is a connected set of size $s_i+1$ containing $b_i$, so there are at most $(eK)^{s_i}$ choices for this union. These unions determine $T$, and hence
\begin{align}
    \left|\{T:T \text{ path-connected to } \mathcal B,\ |T|=s\}\right|
    \leq 2^{|\mathcal B|+s}\prod_{i=1}^{|\mathcal B|}(eK)^{s_i}
    = 2^{|\mathcal B|+s}(eK)^s.
\end{align}

    
    Putting this together:
    \begin{align}
    \sum_{\substack{T \text{ path-connected to } \mathcal B\\ |T| > \ell}}
    P(0_{\mathcal B},1_T)
    &\leq \sum_{s=\ell+1}^{\infty}
    \sum_{\substack{T \text{ path-connected to } \mathcal B\\ |T| = s}}
    P(0_{\mathcal B},1_T) \\
    &\leq 2^{|\mathcal B|} \sum_{s=\ell+1}^{\infty}
    \left(2eK\tau^{1/(K+1)}\right)^s \\
    &= 2^{|\mathcal B|} \sum_{s=\ell+1}^{\infty}
    \left(\frac{2K}{e(2K+1)}\right)^s \\
    &< 2^{|\mathcal B|} \sum_{s=\ell+1}^{\infty} \frac{1}{2^s} \\
    &= 2^{|\mathcal B|-\ell} \\
    &\leq \varepsilon,
\end{align}
    where in the second equality we plugged in $\tau = \left(\frac{1}{e^2(2K+1)}\right)^{K+1}$ and in the last inequality we used $\ell = |\mathcal B|+\left\lceil\log_2\frac{1}{\varepsilon}\right\rceil$.
\end{proof}

\subsection{Computing the marginal probability of the good bits using cluster expansion}\label{sec:small}
It remains to compute the marginals $P(0_W)$ for $W \subseteq \mathcal{G}$. Since each qubit in $\mathcal{G}$ is ``good'', they each have a low marginal probability of disagreeing with the target bit. Furthermore, by the light-cone structure of the circuit, each output bit only depends on a limited number of other output bits. It turns out that joint probabilities with this structure can be efficiently  estimated to small relative error using cluster expansion \cite{mann2025approximate,Barvinok}. In fact, we can apply the following theorem directly by letting $E_v$ be the event that the output of qubit $v$ agrees with the target bit.
\begin{theorem}[Mann--Waite, Theorem 2 of~\cite{mann2025approximate}]\label{thm:MW}\label{thm:MM}
Fix $K \in\mathbb{Z}_{\ge2}$. Let $\{E_v\}_{v\in V(G)}$ be a set of events in a probability space with strong dependency graph $G$ of maximum degree $K$. Suppose that, for all $v\in V(G)$,
\[
 \Pr\!\left[\overline{E_v}\right]\le
 \left(\frac{1}{e^{2}(2K+1)}\right)^{K+1}.
\]
If for all $S\subseteq V(G)$, $\Pr\left[\bigcap_{v\in S}\overline{E_v}\right]$ can be computed in time $\exp(O(|S|))$, then there is a fully polynomial-time approximation scheme for estimating probability of $\Pr\left[\bigcap_{v\in V(G)}E_v \right]$ that has relative error $\varepsilon$ and runtime $\text{poly}(|V(G)|,\varepsilon^{-1})$.
\end{theorem}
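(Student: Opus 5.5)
The plan is to realize $\Pr[\bigcap_{v} E_v]$ as the value at $z=1$ of a polymer-model partition function on $G$. We then approximate its logarithm by a truncated cluster expansion whose terms involve only small connected subsets of $G$.

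\textbf{Step 1: polymer representation.} Introduce a complex parameter $z$ and set
\[
Z(z)=\sum_{S\subseteq V(G)}(-z)^{|S|}\,p(S),\qquad p(S)=\Pr\Big[\textstyle\bigcap_{v\in S}\overline{E_v}\Big].
\]
By inclusion--exclusion, $Z(1)=\Pr[\bigcap_v E_v]$. Because $G$ is a \emph{strong} dependency graph, $p(S\cup S')=p(S)p(S')$ whenever $S,S'$ are disjoint and non-adjacent. Hence $p(S)$ factorizes over the connected components of $G[S]$. This rewrites $Z(z)$ as a hard-core polymer partition function
\[
Z(z)=\sum_{\Gamma}\prod_{\gamma\in\Gamma} w_\gamma z^{|\gamma|},
\]
where polymers are connected vertex sets $\gamma$ of $G$, compatibility means disjoint and non-adjacent, and $w_\gamma=(-1)^{|\gamma|}p(\gamma)$.

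\textbf{Step 2: convergence of the cluster expansion (main obstacle).} The goal is to verify the Kotecký--Preiss condition for $|z|\le R$ with some fixed $R>1$:
\[
\sum_{\gamma'\not\sim\gamma}|w_{\gamma'}|R^{|\gamma'|}e^{a|\gamma'|}\le a|\gamma| .
\]
First, by choosing an independent set of size $\ge|\gamma|/(K+1)$ inside $\gamma$ and using monotonicity plus strong independence, I bound $|w_\gamma|\le \tau^{|\gamma|/(K+1)}$ with $\tau^{1/(K+1)}=1/(e^2(2K+1))$. Next, the number of connected sets of size $s$ containing a given vertex is at most $(eK)^{s-1}$, and at most $(K+1)|\gamma|$ vertices are incompatible with $\gamma$. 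Taking $a=1$, the left side is then at most
\[
(K+1)|\gamma|\sum_{s\ge1}(eK)^{s-1}\Big(\tfrac{Re}{e^2(2K+1)}\Big)^s,
\]
which is a geometric series with ratio $\approx RK/(2K+1)<1$. The delicate point is tuning $R>1$ and $a$ so that this is at most $a|\gamma|$; the specific constant $e^{2}(2K+1)$ in $\tau$ is exactly what should make this close. Kotecký--Preiss then gives that $Z$ is zero-free on $|z|\le R$. It also gives that the cluster expansion
\[
\log Z(z)=\sum_{m\ge1}c_m z^m
\]
converges absolutely there, with tail $\sum_{m>M}|c_m|\le n\,O(R^{-M})$.

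\textbf{Step 3: algorithm.} Take $M=O(\log(n/\varepsilon))$, so that truncating $\log Z(1)$ after order $M$ incurs additive error $\varepsilon$. This yields relative error $O(\varepsilon)$ for $Z(1)$. The coefficient $c_m$ is a sum over connected clusters of total size $m$. Equivalently, I would compute the first $M$ Taylor coefficients of $\log Z$ via Newton-type identities from the coefficients of $Z$, with each coefficient expressed as a sum over connected subgraphs of size $\le M$; this is the approach of Patel--Regts. There are at most $n(eK)^{M}=\poly(n,1/\varepsilon)$ such connected sets, and they can be enumerated in the same time. Each needed $p(S)$ with $|S|\le M$ costs $\exp(O(M))=\poly(n,1/\varepsilon)$ by hypothesis. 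Exponentiating the truncated sum gives the FPTAS.
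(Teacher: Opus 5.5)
Your proposal is correct. The paper itself does not prove this theorem; it imports it from Mann--Waite as a black box. Only Appendix~\ref{app:time} sketches an implementation, and it uses the same pieces you do:
\begin{itemize}
\item your $Z$, there written $F_S(z)=\sum_{A\subseteq S}(-z)^{|A|}P(1_A)$;
\item the M\"obius inversion $h_U=\sum_{S\subseteq U}(-1)^{|U|-|S|}\log F_S$, which vanishes unless $G[U]$ is connected and has no terms below degree $|U|$;
\item a truncation bound $ne^{-m}$ at order $m$, quoted from Mann--Waite rather than proved.
\end{itemize}
So your Step~3 matches the paper's implementation, while your Steps~1--2 supply the convergence proof that the paper outsources.

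That convergence step is not actually delicate. With $a=1$ your bound sums to
\[
(K+1)|\gamma|\sum_{s\ge1}(eK)^{s-1}\Bigl(\frac{R}{e(2K+1)}\Bigr)^{s}=\frac{(K+1)R}{e\,(2K+1-RK)}\,|\gamma| .
\]
This equals $|\gamma|/e$ at $R=1$, and it stays at most $|\gamma|$ for every $R\le e(2K+1)/((1+e)K+1)$. That threshold decreases in $K$ toward $2e/(1+e)\approx 1.46$, so $R=1.4$ works uniformly in $K$. Applying the Koteck\'y--Preiss conclusion to the singleton polymers $\{v\}$ then gives $\sum_m |c_m| R^m\le n$, since every cluster is incompatible with $\{v\}$ for each $v$ in its support. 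This yields your tail bound $nR^{-M}$. Your radius of about $1.4$, versus the $e$ implicit in the quoted $ne^{-m}$, only changes the constant in $M=O(\log(n/\varepsilon))$, and hence the polynomial's exponent. What your route buys is a self-contained, elementary proof; the paper's citation buys the sharper rate.

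One point in Step~3 needs tightening. Applying Newton identities to the global coefficients $\sum_{|S|=j}p(S)$ of $Z$ would cost $n^{\Theta(M)}$, which is only quasi-polynomial. The polynomial bound comes entirely from the localization you mention. Grouping clusters by their support $U$, which is connected with $|U|\le\|\Gamma\|$, gives exactly the paper's $h_U$. Each $h_U$ is computable from the $2^{|U|}$ restricted polynomials $Z_S$, $S\subseteq U$, over the $n(eK)^{M}$ connected sets of size at most $M$.
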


 \begin{corollary} \label{corr:MW}
     When $K \geq 2$, for any set of good bits, $T' \subseteq \mathcal{G}$, there exists a deterministic classical algorithm that estimates $P(0_{T'}) $ to multiplicative error $\varepsilon$ in runtime $\poly(|T'|,\varepsilon^{-1})$.
 \end{corollary}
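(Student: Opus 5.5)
The plan is to apply Theorem~\ref{thm:MW} directly to the induced subgraph $G[T']$, with events $E_v$ for $v\in T'$ defined as ``output qubit $v$ is measured as $0$''. Then $\Pr[\bigcap_{v\in T'}E_v]=P(0_{T'})$. The work is to verify the three hypotheses of that theorem in our setting.

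\textbf{Step 1 (strong dependency graph).} I will check that $G[T']$ is a strong dependency graph for $\{E_v\}_{v\in T'}$. This means that for any disjoint $A,B\subseteq T'$ with no edge of $G[T']$ between them, the families $\{E_v\}_{v\in A}$ and $\{E_v\}_{v\in B}$ are mutually independent. Since $G[T']$ is an induced subgraph, the absence of an edge in $G[T']$ implies the absence of an edge in $G$. Item~2 of Lemma~\ref{lemma:light-cone-facts} then gives independence of the full measurement output vectors on $A$ and $B$, which is exactly what is required. The maximum degree of $G[T']$ is at most $K$. Because the hypothesis of Theorem~\ref{thm:MW} is monotone in the degree bound, and $\tau$ was chosen using $K\ge 2$, I can apply the theorem with the same $K$. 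If needed, I will pad with isolated dummy vertices whose events have probability $1$, or simply note that a graph of degree at most $K$ satisfies the theorem's requirement.

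\textbf{Step 2 (smallness of failure probabilities).} Every $v\in T'$ lies in $\mathcal G$, because $T'\subseteq[n]\setminus\mathcal B$. So $\Pr[\overline{E_v}]=P(1_v)\le\tau=\bigl(e^{2}(2K+1)\bigr)^{-(K+1)}$, which is precisely the threshold in Theorem~\ref{thm:MW}. This is the reason $\tau$ was fixed as it was.

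\textbf{Step 3 (oracle for small sets).} For $S\subseteq T'$, I need to compute $\Pr[\bigcap_{v\in S}\overline{E_v}]=P(1_S)$ in time $\exp(O(|S|))$. I will restrict the circuit to the backward light cone $\mathcal L(S)$, which has at most $k^d|S|$ qubits. All gates outside this cone can be discarded without changing the marginal on $S$, since unitaries outside the light cone cancel in $U^\dagger(|1\rangle\langle1|^{\otimes S})U$. Brute-force state-vector simulation on these $k^d|S|$ qubits, with at most $d$ layers of gates, costs $2^{O(k^d|S|)}\poly(|S|)=\exp(O(|S|))$ for fixed $d,k$. This matches the paper's convention of exact brute-force probability computation.

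With the hypotheses verified, Theorem~\ref{thm:MW} yields a deterministic FPTAS with runtime $\poly(|T'|,\varepsilon^{-1})$ and relative error $\varepsilon$. The edge case $T'=\emptyset$ is trivial, since then $P(0_{T'})=1$. I expect the main subtlety to be Step~1: making sure that independence of the \emph{entire} families, not just pairwise independence, is what ``strong dependency graph'' demands. This is exactly what item~2 of Lemma~\ref{lemma:light-cone-facts} supplies.
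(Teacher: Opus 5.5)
Your proposal is correct and is essentially the paper's argument. You take $E_v$ to be the event that output $v$ reads $0$, get the strong-dependency property of $G[T']$ from item~2 of Lemma~\ref{lemma:light-cone-facts}, use $P(1_v)\le\tau$ for $v\in\mathcal G$, and supply the marginals $P(1_S)$ by exact brute-force simulation on the backward light cone; this is exactly how the paper verifies the hypotheses of Theorem~\ref{thm:MW}, partly in the corollary's proof and partly in the correctness section. The only slip is your aside about padding with isolated dummy vertices, since isolated vertices cannot raise the maximum degree; your other justification, that a strong dependency graph of maximum degree at most $K$ already meets the theorem's hypothesis, is the right one.
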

 \begin{proof}
     For our all-zero target, $\Pr[\overline{E_v}]=P(1_v)$, $\Pr[\bigcap_{v\in S}\overline{E_v}]=P(1_S)$, $\Pr[\bigcap_{v\in W}E_v]=P(0_W)$, and $P(1_v) \le \tau =  \left(\frac{1}{e^{2}(2K+1)}\right)^{K+1}$. 
 \end{proof}

We can handle the $K=0$ and $K=1$ cases trivially.

\section{Algorithm design}\label{sec:algorithm}

We combine the inclusion-exclusion decomposition detailed in Section~\ref{sec:core} with the
Mann--Waite result of Section~\ref{sec:small} for our classical algorithm which we state in the following theorem.

\begin{theorem}[Main theorem]
\label{thm:algorithm-main}
Fix positive integers $d,k$. Let $U$ be an $n$-qubit unitary
quantum circuit with $d$ layers, each consisting of gates on disjoint sets
of at most $k$ qubits. Let $L_{\rm in}$ be the length of its classical
description.
Given $x\in\{0,1\}^n$ and $0<\varepsilon<1/2$, there is a
deterministic classical algorithm using $\poly_{d,k}(n,L_{\rm in},1/\varepsilon)$ operations returning
$\widehat P(x)\in[0,1]$ such that
\[
 \left|\widehat P(x)-|\langle x|U|0^n\rangle|^2\right|
 \le\varepsilon.
\]
\end{theorem}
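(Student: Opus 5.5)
We reduce to $x=0^n$ by replacing $U$ with $X^xU$. This preserves depth (up to absorbing the bit-flips into the last layer), locality, and the light-cone graph $G$. We then assemble the algorithm from the three ingredients of Section~\ref{sec:core}.

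\emph{Preprocessing.} First build $G$ in polynomial time (Lemma~\ref{lemma:light-cone-facts}). For each $i$, compute $P(1_i)$ exactly by brute force on the backward light cone $\mathcal L(i)$, which has at most $k^d$ qubits. If $\sum_i P(1_i)\ge (K+1)\log(2/\varepsilon)$, output $0$; this is correct by Lemma~\ref{lemma:when_to_output_zero}, applied with $\varepsilon/2$. Otherwise, form $\mathcal B$ with threshold $\tau$. Lemma~\ref{lemma:when_to_output_zero} then gives $|\mathcal B|=O_{d,k}(\log(1/\varepsilon))$. If $K\le 1$, the components of $G$ have size at most $2$ and are mutually independent, so $P(0^n)$ is a product of $O(n)$ brute-forced factors and we are done. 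Assume $K\ge2$ from here on.

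\emph{Truncated sum.} Apply Lemma~\ref{lem:truncating_error} with $\varepsilon/2$, so $\ell=|\mathcal B|+\lceil\log_2(2/\varepsilon)\rceil=O(\log(1/\varepsilon))$. Enumerate all $T\subseteq\mathcal G$ with $|T|\le\ell$ whose components of $G[T]$ all touch $\mathcal B$. The counting bound in that proof, $2^{|\mathcal B|+s}(eK)^s$, shows there are $\poly(1/\varepsilon)$ such sets. They can be listed by growing connected sets from each $b_i$ in bounded-degree graphs, within the same time bound. For each $T$:
\begin{itemize}
\item Compute $P(0_{\mathcal B},1_T)$ exactly by simulating the backward light cone of $\mathcal B\cup T$. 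This cone has at most $k^d(|\mathcal B|+\ell)=O(\log(1/\varepsilon))$ qubits, so the cost is $2^{O(\log 1/\varepsilon)}\cdot\poly(n,L_{\rm in})=\poly(n,L_{\rm in},1/\varepsilon)$.
\item Compute an estimate $\widehat P(0_{T'})$ with relative error $\delta$ via Corollary~\ref{corr:MW}. This applies since $T'\subseteq\mathcal G$ and the events $E_v$ for $v\in T'$ have strong dependency graph $G[T']$ with degree at most $K$, by Lemma~\ref{lemma:light-cone-facts}(2). If needed, pad degree bounds to $K\ge 2$. The oracle requirement, computing $P(1_S)$ in time $\exp(O(|S|))$, is met by light-cone brute force with $\poly(n,L_{\rm in})$ overhead.
\end{itemize}
Output the sum, clipped to $[0,1]$. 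Clipping can only reduce the error.

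\emph{Error accounting.} This is the main point to get right. The total error is the truncation error $\varepsilon/2$ plus
\[
\sum_{T}P(0_{\mathcal B},1_T)\,\bigl|\widehat P(0_{T'})-P(0_{T'})\bigr|\le\delta\sum_T P(0_{\mathcal B},1_T).
\]
We need to bound $\sum_T P(0_{\mathcal B},1_T)$ over all attached $T$, with no cutoff on $|T|$. Reusing the estimate from the proof of Lemma~\ref{lem:truncating_error} over $s\ge0$ bounds this by $2^{|\mathcal B|}\sum_{s\ge0}2^{-s}\le 2^{|\mathcal B|+1}=\poly(1/\varepsilon)$. That estimate uses $P(0_{\mathcal B},1_T)\le\tau^{s/(K+1)}$ only through the good bits of $T$, so it is valid for all $s$. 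Choosing $\delta=\varepsilon/2^{|\mathcal B|+2}=1/\poly(1/\varepsilon)$ makes this term at most $\varepsilon/2$. Each Mann--Waite call then costs $\poly(n,1/\delta)=\poly(n,1/\varepsilon)$, and there are $\poly(1/\varepsilon)$ calls.

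The only subtlety I expect is making the enumeration of attached $T$ and the degree/padding conditions of Theorem~\ref{thm:MW} rigorous. The rest is bookkeeping with bit complexity polynomial in $L_{\rm in}$.
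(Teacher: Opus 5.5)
Your proposal is correct and follows the paper's proof essentially step for step: the same bit-flip reduction, the early exit via Lemma~\ref{lemma:when_to_output_zero}, the exact product for $K\le1$, and the good/bad inclusion--exclusion decomposition truncated at $\ell=|\mathcal B|+\lceil\log_2(2/\varepsilon)\rceil$. As in the paper, you compute $P(0_{\mathcal B},1_T)$ exactly by brute force and estimate $P(0_{T'})$ with Mann--Waite, and your general branch also covers the paper's separate $\mathcal B=\emptyset$ and $\mathcal G=\emptyset$ cases. The only difference is the error bookkeeping: the paper calls $\mathsf{Good}$ with additive error $\varepsilon/[2(|\mathcal A|+1)]$ and multiplies by the number of retained terms, whereas you use relative error $\varepsilon/2^{|\mathcal B|+2}$ and bound the weighted sum $\sum_T P(0_{\mathcal B},1_T)\le 2^{|\mathcal B|+1}$ by extending the truncation lemma's estimate to all $s\ge0$; both give a $\poly(1/\varepsilon)$ precision parameter, so the runtime stays polynomial.
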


\subsection{Definitions and subroutines}
We reduce an arbitrary target string to $0^n$ by appending the appropriate single-qubit bit flips. This adds at most one layer and does not enlarge the backward light cones, so their size bound still uses the original depth $d$. We use the following two subroutines. The subroutine $\mathsf{Brute}(S,z_S)$ returns $P(z_S)$ exactly. After a polynomial-time construction of the backward light cone corresponding to the set $S$ of output bits with $|S| \geq 1$, we can evaluate $P(z_S)$ exactly with time
\begin{equation}\label{eq:brute-runtime}
 T_{\mathsf{Brute}}(S)
 =O\!\left(dk^d |S|\,2^{k^d |S|+k}\right)
 =2^{O_{d,k}(|S|)}.
\end{equation}
The extra bit flips we append do not change the asymptotic bound. This is a $k$-local generalization of \cite{BGM}'s exact calculation in time $2^{O(2^d s)}$ for two-qubit gates. We emphasize that $\mathsf{Brute}$ returns the exact probability value.

For $W\subseteq\mathcal G$ and $0<\eta<1$, the subroutine $\mathsf{Good}(W,\eta)$ returns an estimate in $[0,1]$ of $P(0_W)$ with additive error at most $\eta$, in $\poly_{d,k}(n,L_{\rm in},1/\eta)$ operations. The subroutine essentially applies Theorem~\ref{thm:MW} with relative error $\eta$ and evaluates its marginals $P(1_S)$ exactly by $\mathsf{Brute}$. Since $P(0_W)\le 1$, a relative error of at most $\eta$ implies an additive error of at most $\eta$. Note that projecting the resulting estimate onto $[0,1]$ cannot increase its error. Both subroutines are also defined to return one on the empty set.

\subsection{Algorithm}
We construct Algorithm~\ref{alg:main} for the classical algorithm stated in Theorem~\ref{thm:algorithm-main}.
\begin{algorithm}[H]
\caption{Additive estimation of a specified output probability}
\label{alg:main}
\small
\begin{algorithmic}[1]
\Require A circuit $U$, a target string $x\in\{0,1\}^n$,
         and $0<\varepsilon<1/2$.
\Ensure A $\widehat P(x)\in[0,1]$ satisfying
        $|\widehat P(x)-P(x)|\le\varepsilon$.

\State $U \gets \left(\prod_{i=1}^n X_i^{x_i}\right)U$.
\State Construct the light-cone overlap graph $G$ and its maximum degree $K$ for $U$.

\If{$K\le1$}
    \State Let $C_1,\ldots,C_c$ be the connected components of $G$.
    \For{$j=1,\ldots,c$}
        \State $a_j\gets
        \mathsf{Brute}(C_j,0_{C_j})$.
    \EndFor
    \State \Return $\min\{1, \max\{0, \prod_{j=1}^c a_j\}\}$.
\EndIf

\State Set $\tau\gets[e^2(2K+1)]^{-(K+1)}$.
\For{$i=1,\ldots,n$}
    \State $P(1_i)\gets\mathsf{Brute}(\{i\},1_i)$.
\EndFor
\If{$\sum_{i=1}^n P(1_i) \ge(K+1)\log(1/\varepsilon)$}
    \State \Return $0$.
\EndIf

\State $\mathcal{B}\gets\{i: P(1_i) > \tau\}$;
       $\mathcal{G}\gets[n]\setminus \mathcal{B}$.
\If{$\mathcal{B}=\emptyset$}
    \State \Return $\min\{1, \max\{0, \mathsf{Good}([n],\varepsilon)\}\}$.
\ElsIf{$\mathcal{G}=\emptyset$}
    \State \Return $\min\{1, \max\{0, \mathsf{Brute}([n],0^n)\}\}$.
\EndIf
\State $\ell\gets |\mathcal{B}| +\lceil\log_2(2/\varepsilon)\rceil$.
\State Enumerate without repetitions
       $\mathcal A\gets
       \{T\subseteq \mathcal{G}:T\text{ is path-connected to }\mathcal{B},\ |T|\le\ell\}$.
\Statex \hspace{\algorithmicindent}
        Include the empty attachment $T=\varnothing$.
\State $a\gets0$.

\ForAll{$T\in\mathcal A$}
    \State $T'\gets[n]\setminus N_G[\mathcal{B}\cup T]$.
    \State $u\gets\mathsf{Brute}(\mathcal{B}\cup T,(0_\mathcal{B},1_T))$.
    \State $v\gets\mathsf{Good}(T',\varepsilon/[2(|\mathcal{A}|+1)])$.
    \State $a\gets a+(-1)^{|T|}uv$
\EndFor

\State \Return $\min\{1, \max\{0, a\}\}$
\end{algorithmic}
\end{algorithm}

\subsection{Correctness}
Every call to $\mathsf{Brute}$ returns an exact probability and all arithmetic operations are assumed to be exact as per prior work~\cite{BGM,BGL}.
Thus, there are two possible sources of approximation error: truncating the inclusion-exclusion expansion and approximating the remaining marginals of the good bits with $\mathsf{Good}$.

\begin{proof}[Proof of correctness in
Theorem~\ref{thm:algorithm-main}]
After applying the output bit flips, $P(0^n)$ for the modified circuit
is the requested probability $P(x)$ for the original circuit. We
therefore prove the additive-error guarantee for the modified circuit.

If $K\le1$, each connected component $C_j$ has at most two vertices,
and the output vectors on distinct components are jointly independent
by Lemma~\ref{lemma:light-cone-facts}. Every factor $a_j=P(0_{C_j})$ is exact, so their product equals $P(0^n)$ and this branch has zero error.

Now suppose $K\ge2$. The singleton probabilities and their sum
$\sum_iP(1_i)$ are exact. If
$\sum_iP(1_i) \ge(K+1)\log(1/\varepsilon)$, Lemma~\ref{lemma:when_to_output_zero} gives $P(0^n)\le\varepsilon$, so returning zero has error at most $\varepsilon$.

Otherwise, $\sum_iP(1_i)<(K+1)\log(1/\varepsilon)$. The exact definition
$\mathcal B=\{i:P(1_i)>\tau\}$ ensures that every $i\in\mathcal G$ satisfies $P(1_i)\le\tau$. For any $W\subseteq\mathcal G$, Lemma~\ref{lemma:light-cone-facts} makes $G[W]$ a strong dependency graph of maximum degree at most $K$. Its marginals can be computed exactly in $2^{O_{d,k}(|S|)}$ operations by $\mathsf{Brute}$ so all hypotheses of Theorem~\ref{thm:MW} hold.
If $\mathcal B=\emptyset$, the returned value
$\mathsf{Good}([n],\varepsilon)$ estimates $P(0^n)$ directly with error at most $\varepsilon$.
If $\mathcal G=\emptyset$, the returned value
$\mathsf{Brute}([n],0^n)$ equals $P(0^n)$ exactly and this branch has zero error.

It remains to consider the branch where both sets are nonempty.
For each $T$, write
$T'=[n]\setminus N_G[\mathcal B\cup T]$. The exact decomposition is
\begin{equation}\label{eq:correctness-decomposition}
 P(0^n)=
 \sum_{\substack{T\subseteq\mathcal G\\
                  T\text{ path-connected to }\mathcal B}}
 (-1)^{|T|}
 P(0_{\mathcal B},1_T)P(0_{T'}).
\end{equation}

To verify this identity, apply inclusion--exclusion to the good
output bits while enforcing zero on $\mathcal B$. For a selected set
$S\subseteq\mathcal G$, let $T$ contain precisely the components
of $G[S]$ that have a neighbor in $\mathcal B$. Then $S\setminus T$
has no edge to $\mathcal B\cup T$, and the corresponding output
vectors are independent. For each fixed $T$, the possible remainders
are exactly all subsets of $T'$. Their inclusion--exclusion sum is
$P(0_{T'})$, proving the identity.

Let $\mathcal A$ contain all possible $T$s of size at most
$\ell=|\mathcal B|+\lceil\log_2(2/\varepsilon)\rceil$, including the empty set after performing the truncation. 
Lemma~\ref{lem:truncating_error} gives \[\left|P(0^n)-\sum_{T\in\mathcal A}(-1)^{|T|}P(0_{\mathcal B},1_T) P(0_{T'})\right|\le\varepsilon/2.\]
For each retained $T \in \mathcal A$, the algorithm computes $P(0_{\mathcal B},1_T)$ exactly and
$\mathsf{Good}(T',\eta)$ with $\eta=\varepsilon/[2(|\mathcal A|+1)]$.
Since $0\le P(0_{\mathcal B},1_T)\le1$,
\[
 |P(0_{\mathcal B},1_T)\mathsf{Good}(T',\eta) -P(0_{\mathcal B},1_T)P(0_{T'})|=P(0_{\mathcal B},1_T)|\mathsf{Good}(T',\eta) -P(0_{T'})|\le\eta.
\]
The accumulated sum $a$ therefore satisfies
\[
 \left|a - \sum_{T\in\mathcal A}(-1)^{|T|}P(0_{\mathcal B},1_T) P(0_{T'})\right|\le |\mathcal A|\eta
 =\frac{|\mathcal A|\varepsilon}{2(|\mathcal A|+1)}<\frac\varepsilon2,
 \qquad
 |a-P(0^n)|<\frac\varepsilon2+\frac\varepsilon2
 =\varepsilon.
\]
Projection onto $[0,1]$ cannot increase the error. We now see that every branch has additive approximation error at most $\varepsilon$.
\end{proof}

\subsection{Time complexity}
Next, we show that the algorithm has polynomial runtime while deferring the more explicit calculation to the appendix~\ref{app:time}.
\begin{proof}[Proof of the time complexity statement in
Theorem~\ref{thm:algorithm-main}]

The dominant contribution to the runtime comes from enumerating over sets $T$ that are path-connected to $\mathcal B$ and, for each such set, computing both the contribution from $\mathcal B \cup T$ by brute force and the contribution from $T'$ using cluster expansion. To bound this runtime, first note that the number of such sets is at most
\[
1+\sum_{s=1}^{\ell} 2^{s+|\mathcal B|}(eK)^s
= O\left(\mathrm{poly}(\varepsilon^{-1})\right),
\]
since $|\mathcal B|+\ell=O(\log \varepsilon^{-1})$. By Eq~\ref{eq:brute-runtime}, the marginal on $\mathcal B\cup T$ can be computed in time that is exponential in $|\mathcal B\cup T| = O(\log \varepsilon^{-1})$ and thus its runtime is polynomial: $O\left(\mathrm{poly}(L_{\mathrm{in}},\varepsilon^{-1})\right)$. Finally the cluster expansion on $T'$ can be computed in time $O\left(\mathrm{poly}(L_{\mathrm{in}},\varepsilon^{-1} )\right))$ by Theorem~\ref{thm:MW}. Putting this together, we have that the total runtime is $O\left(\mathrm{poly}(L_{\mathrm{in}},\varepsilon^{-1}) \right)$.
\end{proof}

\section*{Acknowledgments}

MC and SJST thank Nolan Coble, Suchetan Dontha, and Lev Stambler for helpful discussions in the past. MJG, JN, and JR thank Bobak Kiani, Chao Yin, and Yifan (Frank) Zhang for introducing cluster expansion and similar ideas to them for a different project.
MJG acknowledges support from NSF QLCI grant OMA-2120757, NSF QLCI grant OMA-2553574, and NSF NQVL:Design:ORAQL grant 2533041. SJST acknowledges funding support from the NUS Development and is grateful to the University of Maryland's Research Experience for Undergraduates (REU) Combinatorics, Algorithms, and AI for Real Problems (CAAR) for introducing this interesting research question and providing a wonderful research experience back in summer 2021.
\\
\\
\textbf{AI Statement:} The authors used ChatGPT 5.6 Sol to make important connections from ideas regarding partitioning light cones and cluster expansion that the authors have used in their recent papers to formulate proof ideas to solve this research question that they have been thinking about since 2021. ChatGPT and Claude were also used for looking up references and proofreading the manuscript. The authors take full responsibility for the correctness of the manuscript.

\bibliographystyle{alpha}

\bibliography{ref}
\appendix
\section{Fine-grained time complexity}
\label{app:time}

Here we keep $d,K\le k^{2d}$ explicit and compute the runtime. 
\begin{proof}[Proof of the time complexity statement in
Theorem~\ref{thm:algorithm-main} keeping $d,K\le k^{2d}$ explicit]
For $K\ge2$, write
\[
 C_K=\frac{K+1}{\tau}
 =(K+1)[e^2(2K+1)]^{K+1}.
\]
Reading the circuit and constructing $G$ take
$\poly(n,L_{\rm in},d,k^d)$ time.
If $K\le1$, there are at most $n$ $\mathsf{Brute}$ calls
on at most two output bits each, followed by at most $n$
multiplications, so this branch has cost
$O(ndk^d\,2^{2k^d+k})$, in addition to preprocessing.
For $K\ge2$, the $n$ $\mathsf{Brute}$ calls for the exact
computation of the singleton probabilities have cost
$O(ndk^d\,2^{k^d+k})$.

On the branch where we cannot simply output zero,
Section~\ref{sec:core} bounds the size of the set of bad bits by
\[
 |\mathcal B|
 <\frac{K+1}{\tau}\log(1/\varepsilon)
 =C_K\log(1/\varepsilon).
\]
The branch that considers the case where all output bits are
good takes time $T_{\mathsf{Good}}([n],\varepsilon)$, with the
explicit bound given below.
In the branch where all output bits are bad, we have
$n=|\mathcal B|$, and the exact light-cone calculation in \cite{BGM} gives
\[
 T_{\mathsf{Brute}}(n)
 =O(dk^d n\,2^{k^d n+k})
 \le O(dk^d n\,2^k)
       (1/\varepsilon)^{O(k^d C_K)}.
\]

Finally, in the branch where the bits are neither all good
nor all bad, set $R=|\mathcal B|+\ell$, where
$\ell=|\mathcal B|+\lceil\log_2(2/\varepsilon)\rceil$.
Then
\[
 R\le(2C_K+3)\log(2/\varepsilon).
\]
By an argument in Section~\ref{sec:core}, the number of $T$s
that we did not truncate, including the empty set, is
\[
 |\mathcal A|
 \le1+\sum_{s=1}^{\ell}2^{s+b-1}(eK)^s
 \le(2/\varepsilon)^{O(C_K\log(K+2))}.
\]
These $T$s can also be enumerated as follows.
For enumeration only, add edges forming a path through all
vertices of $\mathcal B$. The resulting graph has maximum
degree at most $K+2$. Since every connected component of
$G[T]$ is adjacent to $\mathcal B$, the set
$\mathcal B\cup T$ is connected in this enlarged graph.

Fix a vertex in $\mathcal B$. Starting from this vertex,
enumerate all sequences of at most $2(R-1)$ moves along edges
of the enlarged graph, allowing vertices to be revisited.
Every connected set of $r\le R$ vertices containing the
starting vertex can be visited in at most $2(r-1)$ moves
without leaving that set. Thus every admissible
$\mathcal B\cup T$ occurs as the set of vertices visited by
one of these sequences.

For each sequence, let $V_{\mathrm{seq}}$ be its set of visited
vertices. Keep it only if $\mathcal B\subseteq V_{\mathrm{seq}}$
and $T=V_{\mathrm{seq}}\setminus\mathcal B$ satisfies
$|T|\le\ell$ and every connected component of $G[T]$ is adjacent
to $\mathcal B$ in the original graph. Remove duplicate sets
$T$. Since each move has at most $K+2$ choices, the total cost is
\[
 \poly(n,R)\sum_{j=0}^{2(R-1)}(K+2)^j
 \le
 \poly(n,R)
 (2/\varepsilon)^{O(C_K\log(K+2))},
\]
using $R\le(2C_K+3)\log(2/\varepsilon)$.
Note that the added edges are used only for enumeration.
Every probability calculation still uses the original graph $G$.

For each $T$ that was not truncated, the $\mathsf{Brute}$
call acts on $s=|\mathcal B|+|T|\le R$ outputs.
Eq.~\eqref{eq:brute-runtime} tells us it takes time
\[
 T_{\mathsf{Brute}}(s)
 \le O(dk^d R\,2^{k^dR+k})
 \le O(dk^d R\,2^k)
       (2/\varepsilon)^{O(k^d C_K)}.
\]
Computing $T'$ costs polynomial time. The $\mathsf{Good}$
call uses
\[
 \eta=\frac{\varepsilon}{2(M+1)},
 \qquad
 \eta^{-1}=\frac{2(M+1)}{\varepsilon}
 \le(2/\varepsilon)^{O(C_K\log(K+2))}.
\]

Its runtime cost can be made explicit as follows.
The required marginals are supplied by $\mathsf{Brute}$ calls
with $O(dk^d|S|\,2^{k^d|S|+k})$ cost.
At our choice of $\tau$, the logarithmic truncation estimate
of~\cite{mann2025approximate} is at most $ne^{-m}$ at order $m$. Thus
\[
 m=\left\lceil\log\frac{2n}{\eta}\right\rceil
\]
suffices for relative error at most $\eta$.
An implementation of this same logarithmic truncation gives
\[
 T_{\mathsf{Good}}(W,\eta)
 \le
 \poly(n,L_{\rm in},d,k^d,K,m)\,2^k
 \left[2^{k^d+1}(K+1)^2\right]^m
\]
for every $W\subseteq\mathcal G$.

To justify this computational bound, define
\[
 F_S(z)=\sum_{A\subseteq S}(-z)^{|A|}P(1_A),
 \qquad
 h_U(z)=\sum_{S\subseteq U}
       (-1)^{|U|-|S|}\log F_S(z),
\]
where logarithms are formal power series at zero.
Inclusion--exclusion gives
$\log F_W(z)=\sum_{U\subseteq W}h_U(z)$.
Independence implies that $h_U=0$ if $G[U]$ is disconnected.
Moreover, $h_U$ has no term of degree less than $|U|$:
a degree-$j$ term involves a union of at most $j$ vertices,
and the alternating sum cancels it unless that union is $U$.
It therefore suffices to enumerate connected sets $U$ of
size at most $m$.

As in the preceding enumeration argument, walks of length
at most $2(m-1)$ starting at each vertex give at most
$n(K+1)^{2m}$ candidates. Discard visited sets of size
greater than $m$ and remove duplicates.
For each remaining $U$, evaluating all $P(1_A)$ with
$A\subseteq U$ costs at most
\[
 O\!\left(dk^d m\,2^k(1+2^{k^d})^m\right).
\]
Constructing all $F_S$ directly costs
$3^m\poly(m)$ further operations, and computing their formal
logarithms through degree $m$ costs $2^m\poly(m)$ operations.
These costs give the asserted bound on
$T_{\mathsf{Good}}(W,\eta)$.

Since
\[
 m=O\!\left(
       \log n+
       C_K\log(K+2)\log(2/\varepsilon)
      \right),
\]
each $\mathsf{Good}$ call has cost at most
\[
 \begin{split}
 &\poly(n,L_{\rm in},d,k^d,K,m)\,2^k\,
  n^{O(k^d+\log(K+2))}\\
 &\qquad{}\times
 \left(\frac{2}{\varepsilon}\right)^{
   O\left(C_K(k^d+\log(K+2))\log(K+2)\right)}.
 \end{split}
\]
The branch where all output bits are good satisfies this
bound by taking $\eta=\varepsilon$.

Finally, including all the arithmetic costs that are dependent
on $|\mathcal A|$ gives, for $K\ge2$, total cost at most
\[
 \begin{split}
 &\poly(n,L_{\rm in},d,k^d,K)\,2^k\,
  n^{O(k^d+\log(K+2))}\\
 &\qquad{}\times
 \left(\frac{2}{\varepsilon}\right)^{
   O\left(C_K(k^d+\log(K+2))\log(K+2)\right)}.
 \end{split}
\]
For fixed $d,k$, both $K\le k^{2d}$ and $C_K$ are bounded
constants, so every branch has total cost
$\poly_{d,k}(n,L_{\rm in},1/\varepsilon)$.
\end{proof}

\end{document}